\newcommand{\ignore}[1]{}
\newtheorem{theorem}{Theorem}
\newtheorem{lemma}[theorem]{Lemma}
\newtheorem{corollary}[theorem]{Corollary}
\newtheorem{definition}[theorem]{Definition}
\newtheorem*{theorem*}{Theorem}
\newtheorem*{lemma*}{Lemma}
\newtheorem*{corollary*}{Corollary}
\newtheorem*{definition*}{Definition}
\newtheorem*{claim*}{Claim}
\newtheorem*{fact*}{Fact}
\DeclareMathOperator*{\conv}{conv}
\newcommand{\tr}{^{\top}}
\newcommand{\st}{\star}
\renewcommand{\t}[1]{\tilde{#1}}
\newcommand{\abs}[1]{\left|#1\right|}
\newcommand{\norm}[1]{\left\|#1\right\|}
\newcommand{\lr}[1]{\left(#1\right)}
\newcommand{\set}[1]{\left\{#1\right\}}
\newcommand{\reals}{\mathbb{R}}
\newcommand{\K}{\mathcal{K}}
\newcommand{\eps}{\epsilon}
\newcommand{\del}{\delta}
\newcommand{\Del}{\Delta}
\renewcommand{\O}{O}
\newcommand{\tO}{\t{\O}}
\newcommand{\trace}{\mathrm{tr}}
\newcommand{\xtil}{\tilde{x}}
\newcommand{\ytil}{\tilde{y}}
\newcommand{\evmax}{\lambda_{\max}}
\title{A Linear-Time  Algorithm for Trust Region Problems}
\author{%
Elad Hazan \\
\normalsize Technion \\
\normalsize\texttt{ehazan@ie.technion.ac.il}
\and
Tomer Koren \\
\normalsize Technion \\
\normalsize\texttt{tomerk@technion.ac.il}
}
\begin{document}
\maketitle

\begin{abstract}
We consider the fundamental problem of maximizing a general quadratic function over an ellipsoidal domain, also known as the trust region problem. We give the first provable linear-time (in the number of non-zero entries of the input) algorithm for approximately solving this problem. 
Specifically, our algorithm returns an $\eps$-approximate solution in time $\tilde{O}(N/\sqrt{\eps})$, where $N$ is the number of non-zero entries in the input.
This matches the runtime of Nesterov's accelerated gradient descent, suitable for the special case in which the quadratic function is concave, and the runtime of the Lanczos method which is applicable when the problem is purely quadratic.
\end{abstract}

\section{Introduction}

Perhaps the most elementary  quadratic optimization problem  is that of maximizing a quadratic function over the unit ball. This is precisely the {\em trust region} problem, or formally:
\begin{align} \label{eq:problem}
\begin{array}{ll}
	\text{maximize} \quad & x\tr A x + 2b \tr x \\
	\text{subject to} \quad & \norm{x}^{2} \le 1 ~,
\end{array}
\end{align}
where $A \in \reals^{n \times n}$ is an arbitrary $n \times n$ symmetric (possibly indefinite) matrix and $b \in \reals^n$. 

The trust region problem has numerous applications in optimization, where trust region methods \citep{conn2000trust} are among the most empirically successful techniques for solving nonlinear optimization problems.
In these methods, that enjoy strong convergence properties, at each iteration of the algorithm a quadratic approximation of the objective function is minimized over a ball, called the trust region.
Trust region problems are also useful in combinatorial optimization \citep{busygin2006new}, least-squares problems \citep{zhang2010derivative}, constrained eigenvalue problems \citep{gander1989constrained}, and more.

Despite being non-convex, the trust region problem has been shown to exhibit strong duality properties and is known to be solvable in polynomial time (e.g., \citealt{ben1996hidden,ye2003new}).
More specifically, it can be shown to be equivalent to more complex, albeit convex, semidefinite programming (SDP) optimization problems that can be solved using interior-point methods in polynomial time \citep{nesterov1994interior,alizadeh1995interior}.
However, the worst-case complexity of current interior-point based solvers for SDP problems is a large polynomial, and they are inefficient in exploiting sparsity of the data.
As a consequence, this approach not practical for large-scale problems.

On the other hand, the close connections between the trust region problem and eigenvalue problems 
suggest that more efficient trust region algorithms should exist.
Indeed, if the problem is purely quadratic, i.e., when $b=0$, then the trust region problem reduces to the fundamental maximal eigenvector problem, that can be approximated in linear time via the well-known Power and Lanczos methods.
This observation has led authors in the optimization and the numerical analysis communities to develop efficient, matrix-free algorithms that are based solely on matrix-vector products.
Notable examples are the dual-based algorithms of \cite{more1983computing}, \cite{rendl1997semidefinite} and \cite{sorensen1997minimization}, the generalized Lanczos trust-region method of \cite{gould1999solving}, and the more modern advancements of \cite{rojas2001new,erway2009subspace,erway2009iterative,gould2010solving}.
However, while being provably convergent in most cases, the runtime evaluation of these algorithms is essentially empirical and lacks formal guarantees.
To the best of our knowledge, to date there is no formal evidence that the trust region problem can be solved in linear time (in the worst case), as the closely-related maximal eigenvalue problem.

The main hurdle faced by most previous methods is a certain case in which numerical difficulties arise, so-called the ``hard case'' \citep{more1983computing}, and most research in the last two decades on the trust region problem focuses on addressing this issue. 
This phenomenon occurs when the linear component vector $b$ is nearly orthogonal to the eigenspace of the smallest eigenvalue of $A$,
and seems to be the reason for the lack of provable worst-case convergence bounds for the trust region problem.

\paragraph{Our contribution.}
In this paper we show that the additional linear term and the non-convexity of the problem do not add to its complexity:
we devise a novel linear-time algorithm for approximating the trust region problem that has, up to logarithmic terms, the same worst-case time complexity of \emph{a single maximal eigenvalue computation}.  
Our approach reduces the trust region problem into a series of eigenvalue computations, thus it is able to exploit data sparsity and runs in time linear in the number of non-zero entries of the input.
In the specific case where the problem is convex (i.e., when $A$ is negative semidefinite), the same complexity guarantees can be obtained by applying Nesterov's accelerated gradient descent \citep{nesterov1983method} to the problem; thus, our approach can be seen as an analog of the latter algorithm to the general non-convex case.

 Our approach is based on an SDP relaxation of a feasibility version of the trust region problem.
While SDP relaxations are already standard for this problem (e.g., \citealt{rendl1997semidefinite, ye2003new}), our approach uses a specific form of SDP that can be approximated quickly via eigenvalue computations and does not require us to use interior-point solvers.
Another important feature of our relaxation is that it admits an efficient and accurate rounding procedure; that is, given a matrix solution to the SDP we are able to extract an approximate vector solution to the original trust region problem of almost the same quality.
This rounding procedure allows us to avoid numerical problems and complications that exist in previous approaches, which occur for certain configurations of the eigenvalues of $A$, namely the ``hard case'' mentioned above.

At the heart of our approach is an efficient, linear-time solver for the SDP relaxation.
This solver exploits the special structure of the dual problem, which is essentially a one-dimensional problem for which bisection techniques can be applied.
Each dual step of this algorithm, that converges in a logarithmic number of iterations, amounts to a single approximate eigenvalue computation.
The major technical difficulty that results from working in the dual, is in obtaining a primal solution from the dual iterations.
To this end, we employ a technique reminiscent of the ``ellipsoid against hope'' algorithm \citep{papadimitriou2008computing}, used by \cite{arora2005fast} in the context of approximate semi-definite programming, for recovering a primal solution by solving a small linear program formed by the dual iterates.

\section{Setup and Statement of Results}

In this section we formalize the setting and state our main results.
We shall describe our algorithm and results in a slightly more general setting, that include optimization problems of the form:
\begin{align} \label{eq:M-problem}
\begin{array}{ll}
	\text{maximize} \quad & x\tr A x + 2b \tr x \\
	\text{subject to} \quad & \norm{x}_{M}^{2} \le 1 ~,
\end{array}
\end{align}
in which the optimization domain is an ellipsoidal set, described by a general norm constraint $\norm{x}_{M} \le 1$.
Here, $\norm{x}_{M} = \sqrt{x\tr M x}$ is a norm induced by a positive definite matrix $M \in \reals^{n \times n}$.
We note that the optimal solution $v^{\st}$ to this problem is necessarily non-negative, as the objective function equals zero for $x = 0$ and the value at the optimum can only be larger.

For our bounds, we use the following notation.
We let
\begin{equation*}
\begin{aligned}
	\lambda &= \max \set{ 2(\norm{A}_{2} + \norm{b}), \norm{M}_{2}, 1 } ~, \\
	\mu &= \min \set{ \lambda_{\min}(M), 1 } ~,
\end{aligned}
\end{equation*}
where $\norm{\cdot}$ is the Euclidean vector norm, the matrix norm $\norm{\cdot}_{2}$ is the spectral norm, and $\lambda_{\min}(\cdot)$ and $\evmax(\cdot)$ refer to the minimal and maximal eigenvalues of a matrix.
We refer to the ratio $\kappa = \lambda/\mu$ as the ``condition number'' of the problem%
\footnote{Notice that this condition number can be approximated in linear time (see Section~\ref{sub:oracle}).}.
The objective is $\lambda$-Lipschitz, so the optimal value $v^\st$ lies in the interval $[0,\lambda]$.
For our runtime results, we also let~$N$ be an upper bound over the number of non-zero entries in the matrices $A$ and $M$, and assume without loss of generality that~$N \ge n$.

As mentioned earlier, our goal is to reduce the approximation of problem~\eqref{eq:M-problem} in the general case to a series of approximate eigenvalue computations, and thereby
obtain an algorithm that runs in time linear in the number of non-zero entries $N$ in the matrices~$A$ and $M$.
To this end, we formally define the notion of an approximate eigenvalue oracle.

\begin{definition} \label{def:oracle}
An approximate eigenvalue oracle is a randomized procedure that given a matrix $A \in \reals^{n \times n}$ and parameters $\eps,\del > 0$, with probability at least $1-\del$ returns a vector $x \in \reals^n$ such that $x\tr A x \ge \evmax(A) - \eps$.
\end{definition}

An approximate eigenvalue oracle can be implemented to run in linear time via the Lanczos method. 
For completeness, this is formally shown in Section~\ref{sub:oracle}.

Our algorithm solves the equivalent feasibility problem
\begin{align} \label{eq:intro-feas}
\begin{array}{lll}
	x\tr A x + 2b \tr x &\ge& c + \eps/\kappa \\
	\norm{x}_{M}^{2} &\le& 1 - \eps/\kappa ~,
\end{array}
\end{align}
with $c \in [0,\lambda]$.
Indeed, we can reduce the task of $\eps$-approximating the trust region problem to~\eqref{eq:intro-feas} via a binary search over the optimum~$v^{\st}$ (incurring an additional log-factor to the running time), with $c$ being our current guess of $v^\st$.
Since $\eps/\kappa \le \eps$, tightening the first constraint by $\eps/\kappa$ only harms the approximation by a constant factor.
Note that we have also relaxed the second constraint by $\eps/\kappa$, but this only shifts the optimal solution by an Euclidean distance of at most $(1/\mu)\eps/\kappa$, which in turn translates to an additional $(\lambda/\mu) \eps/\kappa = \eps$ bias in the objective value (since the objective function is $\lambda$-Lipschitz).

More specifically, the algorithm we describe approximately solves \eqref{eq:intro-feas}, in the sense that it either correctly declares that the problem is infeasible or finds a vector $x \in \reals^{n}$ such that
\begin{align} \label{eq:intro-feas-approx}
\begin{array}{lll}
	x\tr A x + 2b \tr x &\ge& c \\
	\norm{x}_{M}^{2} &\le& 1 ~.
\end{array}
\end{align}
The main result of this manuscript is the following.

\begin{theorem} \label{thm:main}
Given an approximate eigenvalue oracle and parameters $\eps, \del > 0$, with probability at least $1-\del$ (over the randomization of the oracle) Algorithm~\ref{alg:main} below returns a vector $x \in \reals^n$ for which~\eqref{eq:intro-feas-approx} holds or correctly declares that~\eqref{eq:intro-feas} is infeasible.
The algorithm invokes the oracle  $\O(\log(\kappa/\eps))$ times and can be implemented to run in total $\tO(N\sqrt{\kappa/\eps})$ time%
\footnote{Throughout, we use the $\tilde{O}$ notation to hide constant and poly-logarithmic factors.}.
\end{theorem}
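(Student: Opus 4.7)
The plan is to establish Theorem~\ref{thm:main} through three reductions: (i) an SDP lifting of the feasibility problem \eqref{eq:intro-feas}; (ii) a bisection-based solver for the (essentially one-dimensional) dual that uses only approximate eigenvalue computations; and (iii) a rounding procedure that extracts a vector witness for \eqref{eq:intro-feas-approx} from the SDP solution.

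\textbf{SDP lifting.} I would first lift \eqref{eq:intro-feas} by replacing a candidate $x \in \reals^n$ with the rank-one matrix $X = \begin{pmatrix} 1 & x\tr \\ x & xx\tr \end{pmatrix}$ and rewriting the two quadratic constraints as linear inequalities in $X$, while keeping $X \succeq 0$ and $X_{11}=1$. Dropping the rank-one restriction yields an SDP feasibility problem that relaxes \eqref{eq:intro-feas}. Since the program has only two linear inequalities beyond the normalization $X_{11}=1$, convex duality says that a certificate of infeasibility takes the form of a nonnegative combination of the two inequalities that is pointwise negative; after scaling this combination is parameterized by a single scalar $\alpha$, so the dual separation problem is essentially one-dimensional. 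For each fixed $\alpha$ in a bounded interval $[0,\poly(\lambda)]$, checking whether $\alpha$ yields a dual certificate reduces to determining the sign of $\evmax$ of an explicit $(n+1)\times(n+1)$ matrix $M(\alpha)$ built from $A$, $b$, $M$ and the two right-hand sides. This is precisely where Definition~\ref{def:oracle} becomes the natural primitive.

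\textbf{Bisection over the dual.} I would then bisect $\alpha$ over its bounded interval with target precision $\eps/\kappa$, terminating in $O(\log(\kappa/\eps))$ steps. Each step calls the oracle on $M(\alpha)$ at accuracy $\eps/\kappa$, which by the Lanczos-based implementation of Section~\ref{sub:oracle} runs in time $\tO(N\sqrt{\kappa/\eps})$. Running the oracle with failure probability $\del/O(\log(\kappa/\eps))$ per call and taking a union bound over the bisection steps gives the claimed $1-\del$ overall success probability, while summing per-call runtimes absorbs the logarithmic number of iterations into the $\tO$, yielding total runtime $\tO(N\sqrt{\kappa/\eps})$.

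\textbf{Primal recovery and rounding.} The main technical obstacle is recovering a vector $x$ satisfying \eqref{eq:intro-feas-approx} from the dual execution. Here I would invoke the ``ellipsoid against hope'' idea of \cite{papadimitriou2008computing} used by \cite{arora2005fast} for approximate SDP: gather the eigenvectors $v_1,\ldots,v_k$ returned by the oracle across all bisection iterations and solve a small linear program over nonnegative weights $p_1,\ldots,p_k$ summing to one that enforces the two linear SDP constraints on the convex combination $X = \sum_i p_i \tilde v_i \tilde v_i\tr$ (with each $\tilde v_i$ suitably embedded into dimension $n+1$ and normalized so that $X_{11}=1$). Since $k=O(\log(\kappa/\eps))$ this LP has polylogarithmic size and is cheap to solve. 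The most delicate step is then to round this low-rank matrix $X$ to a single vector $x$: because $X$ is supported on the $k$-dimensional subspace $\mathrm{span}\{v_i\}$, the task reduces to an explicit low-dimensional trust-region problem in that subspace, solvable exactly by classical closed-form methods. The $\eps/\kappa$ slack introduced in \eqref{eq:intro-feas} relative to \eqref{eq:intro-feas-approx} is precisely what absorbs the inexactness of the oracle calls so that the rounded $x$ satisfies the tight constraints of \eqref{eq:intro-feas-approx}. Conversely, if at every $\alpha$ the oracle produces a dual certificate and the LP turns out infeasible, we correctly declare \eqref{eq:intro-feas} infeasible. Combining these pieces delivers Theorem~\ref{thm:main}.
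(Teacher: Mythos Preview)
Your high-level architecture matches the paper's (SDP lifting, one-dimensional dual solved by bisection with eigenvalue calls, primal recovery from the dual iterates, then rounding to a vector), but two of your concrete choices diverge from the paper and one of them leaves a genuine gap.

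\emph{Normalization of the lift.} You keep the constraint $X_{11}=1$; the paper instead works over $\K_{n+1}=\{X\succeq 0:\trace(X)\le 1\}$ and folds the ``$1$ in the corner'' into the data matrices $A_1,A_2$ themselves (Eq.~\eqref{eq:A1A2}). This is not cosmetic: over $\K_{n+1}$ the dual separation problem is literally $\max_{X\in\K_{n+1}}A(p)\bullet X=\evmax(A(p))$, so each bisection step is exactly an eigenvalue call. With your $X_{11}=1$ constraint the Lagrangian picks up an extra unconstrained multiplier, and it is no longer obvious that the dual is one-dimensional or that a single eigenvalue query per step suffices. Relatedly, the paper's primal recovery uses only the \emph{last} two bisection witnesses $x_1,x_2$ and a single scalar $q\in[0,1]$ (Corollary~\ref{cor:farkas}), which is why the SDP solution has rank exactly $2$; you collect all $O(\log(\kappa/\eps))$ witnesses, which still runs in time but forfeits the rank-$2$ structure the paper exploits downstream.

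\emph{Rounding.} This is where your proposal has a real gap. You propose to round $X$ by ``solving an explicit low-dimensional trust-region problem in the subspace, solvable exactly by classical closed-form methods.'' First, this is circular: you are invoking a trust-region solver inside the trust-region solver (and there is no closed form; one needs an eigendecomposition plus a secular-equation solve \`a la Mor\'e--Sorensen, which is fine for runtime but must be argued). Second, and more importantly, you never justify why a feasible \emph{rank-one} point exists in that subspace at all; the mere feasibility of the lifted $X$ does not give this without invoking an S-lemma/tightness argument you do not state. The paper sidesteps both issues with the Sturm--Zhang rotation (Algorithm~\ref{alg:decomp}, Lemma~\ref{lem:decomp}): starting from the rank-$2$ $X$, it rotates the two components so that each $y_i$ individually satisfies $y_i\tr A_1 y_i\ge \eps'/4r$, then picks one with $y_i\tr A_2 y_i\ge \eps'/4r$. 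The crucial point---which is exactly how the ``hard case'' is avoided---is that $y\tr A_2 y>0$ forces the first coordinate $\alpha=y(1)$ to be nonzero (from the form of $A_2$), so $x=\tilde y/\alpha$ is well-defined and satisfies \eqref{eq:intro-feas-approx}. Your sketch does not contain any mechanism guaranteeing $\alpha\ne 0$, and the hand-wave ``normalized so that $X_{11}=1$'' hides precisely this difficulty.
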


A proof of the theorem is provided in Section~\ref{sub:analysis}.

\section{The Algorithm} \label{sec:algorithm}

In this section we describe our linear-time algorithm for the feasibility problem~\eqref{eq:problem},
which is summarized in Algorithm~\ref{alg:main}.

\begin{algorithm} \caption{$\texttt{TrustRegion}\,(A,b,M,\eps)$ \label{alg:main}}
\begin{algorithmic}[1]

\INPUT $A,M \in \reals^{n \times n}$, $b \in \reals^{n}$ and $\eps>0$
\OUTPUT a vector $x \in \reals^{n}$ satisfying \eqref{eq:intro-feas-approx}, or infeasibility of \eqref{eq:intro-feas}\\[0.2cm]

\STATE let $\eps' = \eps/2\kappa$
\STATE define the $(n+1) \times (n+1)$ matrices
\begin{align} \label{eq:A1A2}
	A_1 = \frac{1}{2\kappa} \cdot \lr{\begin{array}{cc} -c & b\tr \\b & A\end{array}},
	\qquad
	A_2 = \frac{1}{2\kappa} \cdot \lr{\begin{array}{cc} 1 & 0 \\0 & -M\end{array}}
\end{align}

\STATE invoke $\texttt{SolveSDP}(A_{1},A_{2},\eps'/2)$ that returns $X = \sum_{i=1}^{r} x_{i} x_{i}\tr$ as output

\IF {\texttt{SolveSDP} returned ``\texttt{infeasible}''}
\STATE conclude that \eqref{eq:intro-feas} is infeasible
\ELSE
\STATE invoke $\texttt{SZRotation}(A_{1},X)$ that returns $X = \sum_{i=1}^{r} y_{i} y_{i}\tr$ as output
\STATE find a vector $y \in \set{y_{1},y_{2},\ldots,y_{r}}$ for which $y\tr A_{2} y \ge \eps'/4r$
\STATE let $\alpha = y(1)$ be the first entry of $y$ and let $\t{y} = (y(2),y(3),\ldots,y(n+1))$
\STATE \textbf{return} $x = \t{y}/\alpha$
\ENDIF 

\end{algorithmic}
\end{algorithm}

\paragraph{SDP relaxation.}
The first step in our approach is to relax the feasibility problem \eqref{eq:intro-feas} to the following SDP program:
\begin{align} \label{eq:intro-sdp}
\begin{array}{l}
	A_i \bullet X \ge \eps'/2 \qquad (i=1,2), \\
	X \in \K_{n+1} ~,
\end{array}
\end{align}
with $A_{1},A_{2}$ given in Eq.~\eqref{eq:A1A2}, $\eps' = \eps/\kappa$ and
\begin{align*}
	\K_{n} = \set{X \in \reals^{n \times n} ~:~ X \succeq 0,\, \trace(X) \le 1} 
\end{align*}
being the set of all $n \times n$ positive semidefinite matrices with trace at most one.
Here and henceforth, we use the notation $A \bullet X = \text{trace}(A\tr X)$ to denote dot-products of matrices.
To see that this is indeed a relaxation, note that given any feasible solution $x$ to \eqref{eq:intro-feas}, the rank-one matrix $X = \tfrac{1}{2} (\,1,~ x\,) \cdot (\,1,~ x\,)\tr$, with trace $\trace(X) = \frac{1}{2}(1+\norm{x}_2^2) \le 1$, is feasible for the SDP.

This specific form of SDP has two important advantages.
First, the structure of $\K_{n+1}$ allows us to use the eigenvalue oracle for solving the SDP, as a linear optimization over this set is equivalent to a maximal eigenvalue computation.
Second, as we show in our analysis, the artificial extra slack of $\eps'/2$ we imposed in the constraints ensures that a solution to this relaxation has sufficient mass on its first entry, which makes it possible to avoid the ``hard case'' entirely.

\paragraph{Linear-time SDP solver.}

The main ingredient in our approach, which is described in Section~\ref{sec:sdp-solver}, is a linear-time procedure \texttt{SolveSDP} for solving SDP programs of the form~\eqref{eq:intro-sdp}.
As discussed above, generic SDP solvers are not suitable for this task as they are not able to exploit the sparsity of the input and run in super-linear time.
In order to approximate this problem quickly, we avoid solving the primal problem directly and instead attack the dual problem to \eqref{eq:intro-sdp}:
\begin{align*}
	& p \cdot A_1 \bullet X + (1-p) \cdot A_2 \bullet X  < \eps'/2 \qquad \forall ~ X \in \K_{n+1} ~, \\
	& 0 \le p \le 1 ~.
\end{align*}
We show that this one-dimensional problem can be solved quickly via a binary search that in each iteration invokes an approximate eigenvalue oracle on a matrix of the form $p A_1 + (1-p) A_2$.
Moreover, whenever the primal SDP \eqref{eq:intro-sdp} is feasible, we show how to efficiently recover an approximate solution, namely a matrix $X \in \K_{n+1}$ such that $A_i \bullet X \ge \eps'/4$, from the outputs of the oracle calls along the execution of the binary search.
Formally, in Section~\ref{sec:sdp-solver} we prove:

\begin{lemma} \label{lem:sdp}
Given access to an approximate eigenvalue oracle, with probability at least $1-\del$ (over the randomization of the oracle) \texttt{SolveSDP} outputs a decomposition $X = \sum_{i=1}^{r} x_i x_i\tr$ of a matrix $X \in \K_{n+1}$ of rank $r=2$ for which $A_i \bullet X \ge \eps'/4 \;\; (i=1,2)$, or correctly declares that \eqref{eq:intro-sdp} is infeasible.
The algorithm calls the oracle at most~$\O(\log(\kappa/\eps))$ times and can be implemented to run in total time~$\tO(N\sqrt{\kappa/\eps})$.
\end{lemma}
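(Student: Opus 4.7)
My plan is to attack the Lagrangian dual of the SDP \eqref{eq:intro-sdp} and recover the primal from the dual iterates. Associating a multiplier $p$ to the first constraint and $1-p$ to the second, the dual reduces to the one-dimensional convex problem
\[
  \min_{p \in [0,1]} f(p), \qquad f(p) \eqdef \max_{X \in \K_{n+1}} \lr{p A_1 + (1-p) A_2} \bullet X.
\]
Because $\K_{n+1}$ is the set of trace-bounded PSD matrices, the inner maximum equals $\max\set{0,\evmax(pA_1 + (1-p)A_2)}$ and is attained at a rank-one matrix $x_p x_p\tr$ where $x_p$ is a unit top eigenvector. Hence a single approximate-eigenvalue call evaluates $f$ up to additive error $\eps_{\text{or}}$, and the associated optimizer $X_p = x_p x_p\tr$ furnishes the approximate subgradient $(A_1 - A_2) \bullet X_p$ of $f$ at $p$. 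Observe also that if the primal SDP is feasible with slack $\eps'/2$, then $f(p) \ge \eps'/2$ for every $p \in [0,1]$; conversely, witnessing $f(p) < \eps'/2$ at some $p$ certifies primal infeasibility.

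I would then run bisection on $[0,1]$: at each query $p$, invoke the oracle on $pA_1 + (1-p)A_2$ with precision $\eps_{\text{or}} = \Theta(\eps')$ and, based on the sign of $(A_1 - A_2) \bullet X_p$, recurse on whichever half-interval contains the minimizer of $f$. After $\O(\log(\kappa/\eps))$ rounds the remaining interval has length $\eps'/\poly(\kappa)$. If during the bisection any queried value is smaller than $\eps'/2 - \eps_{\text{or}}$, the algorithm declares infeasibility, and if some queried $X_p$ already satisfies $A_i \bullet X_p \ge \eps'/4$ for $i = 1,2$, we output this rank-one certificate (padded to a rank-two decomposition by appending a zero vector).

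For primal recovery in the generic case, I would take two iterates $X^{(1)}, X^{(2)}$ from the final interval that straddle the minimizer, meaning $(A_1 - A_2) \bullet X^{(1)}$ and $(A_1 - A_2) \bullet X^{(2)}$ have opposite signs, and form the unique convex combination
\[
  X = \alpha X^{(1)} + (1-\alpha) X^{(2)} \in \K_{n+1}
\]
that equalizes $A_1 \bullet X = A_2 \bullet X$; by construction this is a valid rank-two decomposition. For any $p$ in the final interval, both $X^{(1)}$ and $X^{(2)}$ satisfy $(pA_1 + (1-p)A_2)\bullet X^{(i)} \ge \eps'/2 - 2\eps_{\text{or}}$ (combining the oracle guarantee at the query point where $X^{(i)}$ was produced with the Lipschitzness of $p \mapsto (pA_1 + (1-p)A_2) \bullet X^{(i)}$ over the tiny interval), so the same lower bound holds for $X$. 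But this linear combination evaluates to the common value $A_1 \bullet X = A_2 \bullet X$, yielding $A_i \bullet X \ge \eps'/2 - 2\eps_{\text{or}} \ge \eps'/4$ as required. This is a concrete instance of the `ellipsoid against hope' recovery technique applied to a one-dimensional dual.

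The main technical obstacle will be tuning $\eps_{\text{or}}$ and the bisection tolerance so that the interval-shrinkage error plus oracle error totals at most $\eps'/4$, and handling boundary minimizers $p^* \in \set{0,1}$ (where only one of $X^{(1)}, X^{(2)}$ exists and we output the single-iterate solution padded with a zero vector). The runtime then follows by combining the $\O(\log(\kappa/\eps))$ outer iterations with the Lanczos-based oracle bound from Section~\ref{sub:oracle}, which costs $\tO(N/\sqrt{\eps_{\text{or}}}) = \tO(N\sqrt{\kappa/\eps})$ per call, and the failure probability $\delta$ is achieved by a union bound over the oracle calls with individual confidence $1 - \delta/\O(\log(\kappa/\eps))$.
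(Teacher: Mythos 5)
Your overall strategy coincides with the paper's: in the paper this lemma is proved simply by invoking Theorem~\ref{thm:sdp}, and what you write out is essentially that theorem's proof --- dual bisection driven by an approximate eigenvalue oracle, infeasibility certified by a single low dual value, and primal recovery from the two stored rank-one iterates. The one place you deviate is the recovery step: you choose the convex weight $\alpha$ that equalizes $A_1 \bullet X = A_2 \bullet X$ and lower-bound the common value directly from the oracle values at the two nearby query points plus Lipschitzness of $p \mapsto A(p)\bullet X^{(i)}$ over the final short interval, whereas the paper proves mere existence of a suitable weight via a minimax/Farkas argument (Lemma~\ref{lem:duality} and Corollary~\ref{cor:farkas}) and then finds it by solving the small program \eqref{eq:q}. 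Your construction is in effect an explicit solution of that program; it is valid, the sign condition needed for $\alpha \in [0,1]$ holds by the way the two iterates are stored, and your handling of the one-sided (boundary-minimizer) case also goes through.

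The genuine omission is precisely the content of the paper's own proof of this lemma: you never verify that the specific matrices $A_1, A_2$ of Eq.~\eqref{eq:A1A2} satisfy $\norm{A_1}_2, \norm{A_2}_2 \le 1$ for every $c \in [0,\lambda]$. Your runtime claim silently assumes it: by Lemma~\ref{lem:approx-ev} each oracle call on $A(p) = pA_1 + (1-p)A_2$ costs $\tO\bigl(N\sqrt{\norm{A(p)}_2/\eps_{\mathrm{or}}}\bigr)$, so the asserted per-call cost $\tO(N/\sqrt{\eps_{\mathrm{or}}}) = \tO(N\sqrt{\kappa/\eps})$ requires $\norm{A(p)}_2 = \O(1)$; if the norm were only $\O(\kappa)$ the bound would degrade to $\tO(N\kappa/\sqrt{\eps})$. (The same bound is what makes the Lipschitz constant in your error budget an absolute constant, though there your very short interval could absorb a $\poly(\kappa)$ factor; the runtime cannot.) This is exactly why the algorithm rescales by $1/(2\kappa)$: for a unit vector $x = (\alpha, y)$ one has $2\kappa\,|x\tr A_1 x| \le c + 2\norm{b} + \norm{A}_2 \le 2\lambda$ and $2\kappa\,|x\tr A_2 x| \le 1 + \norm{M}_2 \le 2\lambda$, and since $\mu \le 1$ gives $\lambda \le \kappa$, both spectral norms are at most $1$. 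Adding this short verification closes the gap and makes your argument complete.
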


An important feature of our SDP solver is that it produces a solution of a very low rank, namely with $r=2$ (it can be shown that a rank-2 solution for formulation \eqref{eq:intro-sdp} does exist, via duality conditions). 
As we explain below, this would allow us to recover a rank-$1$ solution quickly, and in turn obtain a feasible solution to \eqref{eq:intro-feas} in linear time. 
Nevertheless, we preserve full generality and present our algorithm with arbitrary rank $r$, as our approach may work with any other SDP solver that might produce solutions of rank higher than two.

\paragraph{Rounding of SDP.}

Finally, we show how to obtain an approximate solution to \eqref{eq:intro-feas} from our solution $X$ to the relaxation, with linear-time computations.
The method we describe is based on a matrix rotation procedure, given in Algorithm~\ref{alg:decomp}, which is a variant of a procedure due to \cite{sturm2003cones}.

\begin{algorithm} \caption{\texttt{SZRotation}\,$(A,X)$ \label{alg:decomp}}
\begin{algorithmic}[1]

\INPUT {matrices $A$ and $X = \sum_{i=1}^{r} x_{i} x_{i}\tr$ such that $A \bullet X \ge a$}\\
\OUTPUT {matrix $X = \sum_{i=1}^{r} x_{i} x_{i}\tr$ such that $A \bullet x_{i}x_{i}\tr \ge a/r$ for all $i$}\\[0.2cm]

\STATE let $a' = A \bullet X$

\WHILE {there exist $x_i,x_j$ such that $x_i\tr A x_i > a'/r$, $x_j\tr A x_j < a'/r$}
	\STATE compute a root $t$ of the quadratic equation
		\begin{align} \label{eq:quad}
			\lr{x_i\tr A x_i - \frac{a'}{r}} \cdot t^2
			+ (2 x_i\tr A x_j) \cdot t
			+ \lr{x_j\tr A x_j - \frac{a'}{r}} = 0
		\end{align}
	\STATE replace the vectors $x_{i},x_{j}$ with the vectors 
	$
		\t{x}_{i} = \frac{1}{\sqrt{t^2 + 1}} (t x_i + x_j),~
		\t{x}_{j} = \frac{1}{\sqrt{t^2 + 1}} (x_i - t x_j)
	$
\ENDWHILE

\STATE {\bf return} $X = \sum_{i=1}^{r} x_{i} x_{i}\tr$

\end{algorithmic}
\end{algorithm}

The procedure provides the following guarantee, which is proved in Section~\ref{sub:analysis} below.

\begin{lemma} \label{lem:decomp}
Given a decomposition $X = \sum_{i=1}^r x_i x_i\tr$ of a positive semidefinite matrix $X$ of rank $r$ and an arbitrary matrix $A$ with  $A \bullet X \ge a$, \texttt{SZRotation} outputs a decomposition $X = \sum_{i=1}^r y_i y_i\tr$ such that $y_i \tr A y_i \ge a/r$ for all $i \in [r]$.
The procedure runs in time $\O(N r)$, where $N \ge n$ is the number of non-zero entries in $A$.
\end{lemma}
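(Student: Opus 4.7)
My plan is to establish the invariants of a single rotation step, argue termination, and bound the runtime; the termination argument is the most delicate step.

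First, I would verify that a single rotation step preserves the decomposition. A direct expansion shows
\[
\tilde x_i \tilde x_i\tr + \tilde x_j \tilde x_j\tr
  = \tfrac{1}{t^2+1}\bigl[(t x_i + x_j)(t x_i + x_j)\tr + (x_i - t x_j)(x_i - t x_j)\tr\bigr]
  = x_i x_i\tr + x_j x_j\tr,
\]
because the mixed $x_i x_j\tr + x_j x_i\tr$ cross terms from the two products cancel. Hence $X$ and thus $a' = A \bullet X$ are invariants of the loop. Next, expanding $\tilde x_i\tr A \tilde x_i$ as a function of $t$ shows that the equation $\tilde x_i\tr A \tilde x_i = a'/r$ is exactly~\eqref{eq:quad}. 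The loop condition gives $x_i\tr A x_i - a'/r > 0 > x_j\tr A x_j - a'/r$, so the product of these factors is negative and the discriminant of~\eqref{eq:quad} is strictly positive, guaranteeing a real root $t$. For any such root, $\tilde x_i\tr A \tilde x_i = a'/r$ by construction, and the identity $\tilde x_i\tr A \tilde x_i + \tilde x_j\tr A \tilde x_j = x_i\tr A x_i + x_j\tr A x_j$ (a consequence of the decomposition invariant) determines $\tilde x_j\tr A \tilde x_j$.

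The main obstacle is termination. The key observation is that any vector with $A$-value equal to $a'/r$ can never be selected by the while condition, since it is neither strictly above nor strictly below the threshold. Each iteration produces exactly such a locked vector $\tilde x_i$, so the number of locked vectors strictly increases per pass. Combined with the invariant $\sum_i x_i\tr A x_i = a'$, this forces termination in at most $r-1$ iterations: once $r-1$ vectors are locked at $a'/r$, the last vector is forced to the same value by the sum constraint and the while condition fails. At that point $y_i\tr A y_i = a'/r \ge a/r$ for every $i$, since $a' = A \bullet X \ge a$ by hypothesis, yielding the claimed correctness.

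For the runtime, I would precompute all $r$ scalars $x_i\tr A x_i$ via one matrix-vector product per vector, at a cost of $O(Nr)$. Each iteration then requires only one further matrix-vector product, used to compute $x_i\tr A x_j$ and hence the coefficients of~\eqref{eq:quad}, plus $O(n)$ work to form $\tilde x_i, \tilde x_j$; no fresh matrix-vector products are needed to update the $A$-values of the two new vectors, since one equals $a'/r$ by construction and the other is given by the sum invariant. With $N \ge n$ and at most $r-1$ iterations, the total runtime is $O(Nr)$.
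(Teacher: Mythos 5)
Your proof is correct and follows essentially the same route as the paper's: preserve the decomposition invariant, get a real root from the opposite signs of the leading and constant coefficients of~\eqref{eq:quad}, lock one vector per iteration at exactly $a'/r$ (which the strict while condition then never reselects), and conclude termination within $r$ iterations and an $\O(Nr)$ runtime. Your write-up is in fact slightly more explicit than the paper's on the termination count and on why one matrix-vector product per iteration suffices, but these are refinements of the same argument rather than a different approach.
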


In particular, for being able to recover a solution in linear time, it is essential to begin with a solution to the SDP of a constant rank---which is provided by our SDP solver.
Employing this decomposition, one can compute an approximate solution to the feasibility problem~\eqref{eq:intro-feas}, namely a vector $y$ satisfying~\eqref{eq:intro-feas-approx}.

\subsection{Analysis} \label{sub:analysis}

In the rest of this section we  prove  our main theorem (Theorem~\ref{thm:main}).
First, we prove Lemma~\ref{lem:sdp} using the results of Section~\ref{sec:sdp-solver}.

\begin{proof}[Proof of Lemma~\ref{lem:sdp}]
In order to show that the lemma follows from Theorem~\ref{thm:sdp} below, we have to show that for any $c \in [0, \lambda]$, the spectral norm of the matrices $A_{1}$ and $A_{2}$ defined in Eq.~\eqref{eq:A1A2} is at most $1$.

We first consider the matrix $A_{1}$.
Let $x \in \reals^{n+1}$ be some unit vector and write $x = (\alpha, y)$ with $\alpha \in \reals$ and $y \in \reals^{n}$.
Since $x$ has unit norm, $\abs{\alpha} \le 1$ and $\norm{y}^{2} \le 1$, thus
\begin{align*}
	2\kappa \cdot | x\tr A_{1} x |
	&= | -c\alpha^{2} + 2\alpha b\tr y + y\tr A y \, | \\
	&\le c\alpha^{2} + 2 |\alpha| \norm{b}\norm{y} + \norm{A}_{2} \norm{y}^{2} \\
	&\le c + 2 (\norm{b} + \norm{A}_{2}) \\
	&\le 2\lambda ~,
\end{align*}
where the last inequality follows from the fact that $2(\norm{A}_{2} + \norm{b}) \le \lambda$.
Since $\lambda \le \kappa$ and the above applies for any unit vector $x$, we have shown that $\norm{A_{1}}_{2} \le 1$.

Similarly, for the matrix $A_{2}$ we have
\begin{align*}
	2\kappa \cdot | x\tr A_{2} x |
	= | -\alpha^{2} + y\tr M y |
	\le 1 + \norm{M}_{2}
	\le 2\lambda
\end{align*}
for any unit vector $x$, which implies that $\norm{A_{2}}_{2} \le 1$.
\end{proof}

Next, we prove Lemma~\ref{lem:decomp}.

\begin{proof}[Proof of Lemma~\ref{lem:decomp}]
First, note that equation \eqref{eq:quad} has real roots since $x_i\tr A x_i - a'/r > 0$ and $x_j\tr A x_j - a'/r < 0$.
One can  verify that $\xtil_i \xtil_i\tr + \xtil_j \xtil_j\tr = x_i x_i\tr + x_j x_j\tr$, so that the equality $X = \sum_{i=1}^r x_i x_i\tr$ remains true along the execution of the algorithm.
On the other hand, note that
\begin{align*}
	\xtil_i\tr A \xtil_i 
	&= \frac{(t x_1 + x_2)\tr A (t x_1 + x_2)}{t^2+1} \\
	&= \frac{x_1\tr A x_1 \cdot t^2
		+ 2 x_1\tr A x_2 \cdot t
		+ x_2\tr A x_2}{t^2+1} \\
	&= \frac{a'}{r} ~,
\end{align*}
where the final equality follows from $t$ being a root of \eqref{eq:quad}.
Hence, each iteration produces an additional index $i$ for which $x_i\tr A x_i = a'/r$, and after at most $r$ iterations it must be the case that $x_i\tr A x_i = a'/r \ge a/r$ for all $i$.
Consequently, the total runtime is $\O(N r)$ as each iteration needs~$\O(N)$ time.
\end{proof}

We can now prove our main theorem.

\begin{proof}[Proof of Theorem~\ref{thm:main}]

We first prove the runtime guarantee, and then show the correctness of the algorithm.

\paragraph{Running time.}
Note that the number of non-zero entries in each of the matrices $A_{1}, A_{2}$ is $O(N)$.
Hence, according to Lemma~\ref{lem:sdp}, the call to \texttt{SolveSDP} in line~2 of the algorithm invokes the approximate eigenvalue oracle at most $\O(\log(\kappa/\eps))$ times and returns in time $\tO(N\sqrt{\kappa/\eps})$.
Whenever the SDP relaxation is feasible \texttt{SolveSDP} produces a solution of rank $r=2$, in which case Lemma~\ref{lem:decomp} states that the invocation of \texttt{SZRotation} in line~6 runs in time $\O(N)$.
Therefore, the total runtime of the entire algorithm is $\tO(N\sqrt{\kappa/\eps})$.

\paragraph{Correctness.}
First, assume that the algorithm returns ``\texttt{infeasible}''.
By the guarantees of \texttt{SolveSDP} (Lemma~\ref{lem:sdp}), this means that the SDP relaxation \eqref{eq:intro-sdp} is infeasible.
Hence, the problem~\eqref{eq:intro-feas} is also infeasible (otherwise, as explained above, we could convert a feasible solution $x$ of \eqref{eq:intro-feas} to a feasible solution $X$ to the SDP relaxation), as required in this case.

Next, assume that the algorithm returns a solution vector $x \in \reals^{n}$.
In this case, \texttt{SolveSDP} returns a matrix $X$ such that $A_i \bullet X \ge \eps'/4$ ($i=1,2$).
Hence, by Lemma~\ref{lem:decomp}, the output $X = \sum_{i=1}^{r} y_{i} y_{i}\tr$ of \texttt{SZRotation} has $y_{i} \tr A_{1} y_{i} \ge \eps'/4r$ for all $i=1,2,\ldots,r$.
On the other hand, since $A_2 \bullet X \ge \eps'/4$, at least one of the vectors $y_{i}$ must satisfy $y_i \tr A_2 y_i \ge \eps'/4r$.
Hence, the algorithm can indeed find a vector $y \in \set{x_1,\ldots,x_r}$ such that $y \tr A_2 y \ge \eps'/4r$, for which we also have $y\tr A_1 y \ge \eps'/4r$.
Rewriting the last two inequalities in terms of $\alpha$ and~$\ytil$, we get
\begin{align*}
	\ytil\tr A \ytil + 2\alpha b\tr \ytil - c \alpha^2 \ge \eps'/4r 
	\quad\text{and}\quad
	\alpha^2 - \norm{\ytil}_{M}^2 \ge \eps'/4r .
\end{align*}
The second inequality implies that $\alpha \ne 0$, and so we can divide through by $\alpha^2$. 
Note that since $X \in \K_{n+1}$ we also have $\alpha^2 \le \alpha^{2} + \norm{\t{y}}^{2} = \norm{y}^2 \le \trace(X) \le 1$, which implies that
the vector $x = \ytil/\alpha$ has
\begin{align*}
	x\tr A x + 2 b\tr x - c \ge \eps'/4r
	\quad\mbox{and}\quad
	1 - \norm{x}_{M}^2 \ge \eps'/4r \,.
\end{align*}
This in particular means that
\begin{align*}
	x\tr A x + 2 b\tr x \ge c
	\quad\mbox{and}\quad
	\norm{x}_{M}^2 \le 1 \,,
\end{align*}
as required.
\end{proof}

\section{Solving the Relaxation in Linear Time} \label{sec:sdp-solver}

In this section we describe a linear-time algorithm for approximately solving an SDP problem of the form
\begin{align} \label{eq:sdp-feas}
\begin{array}{lll}
	A_i \bullet X \ge \eps \qquad (i=1,2) \\
	X \in \K_{n} ~,
\end{array}
\end{align}
where $A_i \in \reals^{n \times n}$, $\norm{A_i}_2 \le 1$ ($i=1,2$) and the number of non-zero entries in each of the matrices $A_1,A_2$ is at most $N \ge n$.
Namely, the algorithm we present either finds a matrix $X \in \K_{n}$ for which $A_i \bullet X \ge \eps/2$ ($i=1,2$), or correctly declares that \eqref{eq:sdp-feas} is infeasible.

Our algorithm, given in Algorithm~\ref{alg:sdp}, applies as a first step a binary search for approximately solving the dual feasibility problem:
\begin{align*}
	& A(p) \bullet X < \eps/2 \qquad \forall X \in \K_{n}, \\
	& 0 \le p \le 1 ~,
\end{align*}
where we denote $A(p) = p A_1 + (1-p) A_2$ for all $p \in [0,1]$.
The binary search either correctly declares that the dual problem is infeasible or finds $p \in [0,1]$ such that $A(p) \bullet X < \eps$ for all $X \in \K_{n}$.
Infeasibility of the dual implies feasibility of the primal problem \eqref{eq:sdp-feas}, in which case our algorithm is able to efficiently recover a primal solution from the binary search iterates by solving a simple linear program.

The algorithm assumes the availability of an approximate eigenvector computation oracle, denoted \texttt{ApproxEV}, as described in Definition~\ref{def:oracle}.
In Section~\ref{sub:oracle} we explain how such an oracle can be implemented in time $\tO(N/\sqrt{\eps})$, where $N$ is the number of non-zero entries in the input matrix and $\eps$ is the error parameter.

\begin{algorithm} \caption{$\texttt{SolveSDP}\,(A_{1},A_{2},\eps)$ \label{alg:sdp}}
\begin{algorithmic}[1]

\INPUT matrices $A_1, A_2 \in \reals^{n}$ with $\norm{A_i}_2 \le 1$ and $\eps>0$
\OUTPUT matrix $X \in \K_{n}$ such that $A_i \bullet X \ge \eps/2$ ($i=1,2$), or infeasibility of \eqref{eq:sdp-feas} \\[0.2cm]

\STATE initialize~ $T \gets \log_2(8/\eps)$, $p_1 \gets 1, p_2 \gets 1$
\FOR {$t=1$ to $T$}
	\STATE let~ $p \gets (p_1 + p_2)/2$
	\STATE invoke~ $(\lambda, x) \gets \texttt{ApproxEV}(p A_1 + (1-p) A_2,~ \eps/4,~ \del/T)$
	\IF {$\lambda < 3\eps/4$}
		\STATE \textbf{return} ``\texttt{infeasible}''
	\ELSIF {$x\tr A_1 x < x\tr A_2 x$}
		\STATE update~ $p_1 \gets p, \; x_1 \gets x$
	\ELSE[if $x\tr A_1 x > x\tr A_2 x$]
		\STATE update~ $p_2 \gets p, \; x_2 \gets x$
	\ENDIF
\ENDFOR

\STATE compute $0 \le q \le 1$ such that
\begin{align} \label{eq:q}
	q \cdot x_1\tr A_i x_1 + (1-q) \cdot x_2\tr A_i x_2 \ge \eps/2 \qquad (i=1,2)
\end{align}

\STATE \textbf{return} $X = q \cdot x_1 x_1\tr + (1-q) \cdot x_2 x_2\tr$

\end{algorithmic}
\end{algorithm}

We now state the main result of this section.

\begin{theorem} \label{thm:sdp}
Given matrices $A_1, A_2 \in \reals^{n \times n}$ with $\norm{A_i}_2 \le 1$ and parameters $\eps, \del>0$, with probability at least $1-\del$ \texttt{SolveSDP} outputs a matrix $X \in \K_{n}$ of rank $2$ that satisfies $A_i \bullet X \ge \eps/2 \;\; (i=1,2)$, or correctly declares that \eqref{eq:sdp-feas} is infeasible.
The algorithm calls the orcale \texttt{ApproxEV} at most~$\O(\log(1/\eps))$ times and can be implemented to run in total time~$\tO(N/\sqrt{\eps})$.
\end{theorem}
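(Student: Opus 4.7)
My plan is to argue correctness in two cases (``infeasible'' output and matrix output), and then bound the running time. All three parts lean on the approximate-eigenvalue guarantee, combined with the simple observation that $\max_{X \in \K_n} A \bullet X = \max(0, \evmax(A))$ since $X \in \K_n$ is PSD with trace at most $1$.

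\textbf{Infeasibility case.} If the algorithm returns ``infeasible'' at some iteration, then \texttt{ApproxEV} returned $(\lambda, x)$ with $\lambda = x\tr A(p) x < 3\eps/4$. By the oracle guarantee, $\evmax(A(p)) \le \lambda + \eps/4 < \eps$. For any $X \in \K_n$ we would then have $A(p) \bullet X \le \evmax(A(p)) \cdot \trace(X) < \eps$. On the other hand, a feasible solution $X$ to \eqref{eq:sdp-feas} would satisfy $A(p) \bullet X = p(A_1 \bullet X) + (1-p)(A_2 \bullet X) \ge \eps$, a contradiction. So \eqref{eq:sdp-feas} is infeasible, as claimed.

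\textbf{Recovery of a primal solution.} The binary search maintains a bracket $[p_1, p_2]$ together with unit vectors $x_1, x_2$ (returned by the oracle at iterations in which $p_1$ or $p_2$ was last updated) satisfying the invariant $x_1\tr A_1 x_1 \le x_1\tr A_2 x_1$ and $x_2\tr A_1 x_2 \ge x_2\tr A_2 x_2$, along with $x_i\tr A(p_i) x_i \ge 3\eps/4$ (since otherwise the algorithm would have terminated). After $T = \log_2(8/\eps)$ iterations the bracket has length $|p_1 - p_2| \le \eps/8$, and using $\norm{A_1 - A_2}_2 \le 2$ we get $|x_i\tr A(p_1) x_i - x_i\tr A(p_2) x_i| \le \eps/4$, so $x_i\tr A(p_1) x_i \ge \eps/2$ for both $i$. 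Now, writing $f_i(q) = q\, x_1\tr A_i x_1 + (1-q)\, x_2\tr A_i x_2$, the invariant implies $f_1(1) \le f_2(1)$ and $f_1(0) \ge f_2(0)$, so a solution $q^\star \in [0,1]$ to $f_1(q) = f_2(q)$ exists and is given in closed form by the linear equation \eqref{eq:q}. At this $q^\star$,
\begin{align*}
f_1(q^\star) \;=\; p_1 f_1(q^\star) + (1-p_1) f_2(q^\star) \;=\; q^\star\, x_1\tr A(p_1) x_1 + (1-q^\star)\, x_2\tr A(p_1) x_2 \;\ge\; \eps/2,
\end{align*}
and similarly $f_2(q^\star) = f_1(q^\star) \ge \eps/2$. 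Thus $X = q^\star x_1 x_1\tr + (1-q^\star) x_2 x_2\tr$ satisfies $A_i \bullet X \ge \eps/2$; it is clearly PSD of rank at most $2$, and since $x_1, x_2$ are unit vectors we have $\trace(X) = q^\star + (1-q^\star) = 1$, so $X \in \K_n$.

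\textbf{Running time and success probability.} The for-loop runs $T = O(\log(1/\eps))$ iterations, each making one call to \texttt{ApproxEV} on a matrix of the form $pA_1 + (1-p)A_2$ with $O(N)$ non-zeros and accuracy $\eps/4$, which by the implementation of Section~\ref{sub:oracle} takes $\tO(N/\sqrt{\eps})$ time and succeeds with probability $1 - \del/T$; a union bound over all calls yields overall success probability at least $1 - \del$. Finding $q^\star$ from \eqref{eq:q} is a constant-time computation. The total running time is therefore $\tO(N/\sqrt{\eps})$, completing the proof.

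The main subtlety I expect is verifying that the bracket vectors $x_1, x_2$ retain the $3\eps/4$ lower bound on $x_i\tr A(p_i) x_i$ at the \emph{final} values of $p_1, p_2$; this is immediate from the fact that each $x_i$ is refreshed whenever $p_i$ moves, so the oracle guarantee at the moment of update transfers intact to the final values of the endpoints.
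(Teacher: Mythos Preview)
Your proof is correct, and the infeasibility case and running-time analysis match the paper's argument. The primal-recovery step, however, proceeds along a genuinely different route.

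The paper argues via duality: it shows that the one-dimensional system $A(p)\bullet X_i < \eps/2$ ($i=1,2$) is infeasible in $p$ (using the monotonicity of $p\mapsto A(p)\bullet X_i$ together with the bracket width $p_2-p_1\le \eps/8$), and then invokes the Farkas-type Corollary~\ref{cor:farkas} to conclude that some convex combination $qX_1+(1-q)X_2$ satisfies both primal constraints. By contrast, you bypass the duality lemma entirely and construct $q^\star$ explicitly: using the bracket width you first establish $x_i\tr A(p_1)x_i\ge \eps/2$ for both $i$, then take $q^\star$ to be the crossing point of $f_1$ and $f_2$ (whose existence follows from the branch invariants $f_1(1)\le f_2(1)$ and $f_1(0)\ge f_2(0)$), and verify directly that $f_1(q^\star)=f_2(q^\star)=q^\star x_1\tr A(p_1)x_1+(1-q^\star)x_2\tr A(p_1)x_2\ge \eps/2$. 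Your argument is more elementary and constructive (no minimax/Farkas needed, and it yields an explicit formula for $q$); the paper's duality framing is more conceptual and extends naturally to more than two constraints via Corollary~\ref{cor:farkas}.

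One small wording issue: you write that $q^\star$ ``is given in closed form by the linear equation~\eqref{eq:q}'', but \eqref{eq:q} is the system of two inequalities, not the equation $f_1(q)=f_2(q)$. What you actually show is that the solution of $f_1(q)=f_2(q)$ satisfies~\eqref{eq:q}, hence~\eqref{eq:q} is feasible and the algorithm's step succeeds.
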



For proving Theorem~\ref{thm:sdp} we need two simple duality results. 

\begin{lemma} \label{lem:duality}
Let $\mathcal{S}$ be an arbitrary compact set of matrices. 
Exactly one of the following statements holds:
\begin{enumerate}
\item
There exists $X \in \conv \mathcal{S}$ such that $A_i \bullet X \ge \eps \;\; (i=1,2)$.
\item
There exist $p \in [0,1]$ such that $A(p) \bullet X = (p A_1 + (1-p) A_2) \bullet X < \eps$ for all $X \in \mathcal{S}$.
\end{enumerate}
\end{lemma}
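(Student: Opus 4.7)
The plan is to reduce the statement to a two-dimensional separating hyperplane argument. I would consider the linear map $\phi : \mathcal{S} \to \reals^2$ defined by $\phi(X) = (A_1 \bullet X,\, A_2 \bullet X)$, and set $K = \conv \phi(\mathcal{S}) = \phi(\conv \mathcal{S})$ (the two agree by linearity of $\phi$). Because $\mathcal{S}$ is compact and $\phi$ is continuous, $\phi(\mathcal{S})$ is compact, and in finite dimension its convex hull $K$ is therefore a compact convex subset of $\reals^2$. Let $C = \set{(y_1,y_2) \in \reals^2 : y_1 \ge \eps,\, y_2 \ge \eps}$. Statement~1 is then equivalent to $K \cap C \ne \emptyset$, while statement~2 asks for a probability vector $(p, 1-p)$ whose inner product with every point of $\phi(\mathcal{S})$, equivalently of $K$, is strictly less than $\eps$.

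Mutual exclusivity is immediate: any $X$ satisfying statement~1 together with any $p$ from statement~2 would give $A(p) \bullet X \ge \eps$ as a convex combination of two numbers $\ge \eps$, while the strict bound $A(p) \bullet X' < \eps$ on $\mathcal{S}$ extends to every $X' \in \conv \mathcal{S}$ by the usual convex-combination argument, a contradiction. For the contrapositive direction, I would assume $K \cap C = \emptyset$ and invoke strict separation of the compact set $K$ from the closed convex set $C$: there exist $(q_1, q_2) \ne 0$ and $c \in \reals$ with $q_1 y_1 + q_2 y_2 < c \le q_1 z_1 + q_2 z_2$ for all $(y_1,y_2) \in K$ and $(z_1,z_2) \in C$. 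The recession cone of $C$ is the non-negative orthant, which forces $q_1, q_2 \ge 0$; after normalizing so that $q_1 + q_2 = 1$ I may set $p = q_1 \in [0,1]$. Evaluating the right-hand bound at the corner $(\eps, \eps) \in C$ shows $c \le \eps$, so $A(p) \bullet X = q_1 (A_1 \bullet X) + q_2 (A_2 \bullet X) < \eps$ for every $X \in \mathcal{S}$, which is statement~2.

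There is no real obstacle here: the only point that needs any care is getting the \emph{strict} inequality in statement~2, which is exactly what the compact-versus-closed form of the separating hyperplane theorem delivers, and the non-negativity of the separating normal comes for free from $C$'s recession cone being the positive orthant. Everything else is routine.
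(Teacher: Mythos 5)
Your proof is correct, but it takes a genuinely different route from the paper. The paper's argument notes that if no $X^\st \in \conv\mathcal{S}$ satisfies both constraints, then $\min_{p\in[0,1]} A(p)\bullet X = \min_i A_i \bullet X < \eps$ for every $X \in \conv\mathcal{S}$, and then invokes Sion's minimax theorem on the bilinear function $(p,X)\mapsto A(p)\bullet X$ over $[0,1]\times\conv\mathcal{S}$ to swap $\min$ and $\max$ and extract the dual witness $p^\st$. You instead project everything to the plane via $\phi(X)=(A_1\bullet X, A_2\bullet X)$, and separate the compact convex set $K=\conv\phi(\mathcal{S})$ from the closed quadrant $C=\set{y : y_1\ge\eps,\, y_2\ge\eps}$, using the recession cone of $C$ to force the separating normal into the nonnegative orthant and normalizing it into a probability vector $(p,1-p)$; evaluating the separation at the corner $(\eps,\eps)$ yields the strict bound. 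Both arguments are sound and both use compactness in an essential way (the paper to make the maximum over $\conv\mathcal{S}$ attained so the swapped value stays strictly below $\eps$; you to get \emph{strong} separation of $K$ from $C$), and both generalize verbatim from two constraints to $m$ constraints with $p$ ranging over $\Del_m$. What the paper's route buys is brevity, at the cost of quoting Sion's theorem as a black box; what your route buys is a more elementary, self-contained argument (only the basic separating hyperplane theorem in $\reals^2$) that makes the geometry, and the precise role of compactness, explicit. One small presentational point: your parenthetical claim that statement~2 over $\phi(\mathcal{S})$ is equivalent to the same bound over all of $K$ is true (strict inequalities survive finite convex combinations) but unnecessary, since the lemma only quantifies over $\mathcal{S}$; the mutual-exclusivity step you give already handles the passage to $\conv\mathcal{S}$ where it is actually needed.
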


\begin{proof}
First assume that the first statement holds true, i.e.~there exists $X^\st \in \conv \mathcal{S}$ for which $A_i \bullet X \ge \eps \;\; (i=1,2)$.
Then, for all $p \in [0,1]$ we have
\[
	A(p) \bullet X^\st
	= p (A_1 \bullet X^\st) + (1-p) (A_2 \bullet X^\st)
	\ge \eps
\]
which proves that the second statement cannot hold.
Conversely, if there does not exist such~$X^\st$, then $\min_{p \in [0,1]} A(p) \bullet X = \min_{i} A_i \bullet X < \eps$ for all $X \in \conv \mathcal{S}$.
Applying Sion's minimax theorem \citep{sion1958general}, we obtain
\[
	\min_{p \in [0,1]} \max_{X \in \conv\mathcal{S}} A(p) \bullet X
	= \max_{X \in \conv\mathcal{S}} \min_{p \in [0,1]} A(p) \bullet X
	< \eps \,.
\]
This means that there exists $p^\st \in [0,1]$ such that $A(p^\st) \bullet X < \eps$ for all $X \in \mathcal{S}$, that is, the second statement is true.
\end{proof}

When $\mathcal{S} = \set{X_1,X_2,\ldots,X_m}$ is a finite set, we obtain the following corollary that enables us to compute an approximate primal solution from the binary search iterates.

\begin{corollary} \label{cor:farkas}
Let $X_1,X_2,\ldots,X_m$ be arbitrary matrices and assume that there does not exist $p \in [0,1]$ such that $A(p) \bullet X_i < \eps/2$ for all $i=1,2,\ldots,m$.
Then there exists $q \in \Del_m$ such that for $X = \sum_{i=1}^{m} q_i X_i$ it holds that $A_i \bullet X \ge \eps/2 \;\; (i=1,2)$.
\end{corollary}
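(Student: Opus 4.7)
The plan is to derive Corollary~\ref{cor:farkas} as a direct specialization of Lemma~\ref{lem:duality} to the finite collection $\mathcal{S} = \{X_{1}, X_{2}, \ldots, X_{m}\}$, with the accuracy parameter replaced by $\eps/2$. Since $\mathcal{S}$ is finite it is certainly compact, so the lemma applies in this setting; the convex hull $\conv \mathcal{S}$ is just the set of convex combinations $\{\sum_{i} q_{i} X_{i} : q \in \Del_{m}\}$.

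The first step is to observe that the hypothesis of the corollary is precisely the negation of alternative~(2) of the lemma for this choice of $\mathcal{S}$ and with threshold $\eps/2$. Indeed, since $\mathcal{S}$ is finite, quantifying ``for all $X \in \mathcal{S}$'' is the same as quantifying ``for all $i \in \{1,\ldots,m\}$'', so the statement ``there does not exist $p \in [0,1]$ with $A(p) \bullet X_{i} < \eps/2$ for all $i$'' is exactly the failure of statement~(2).

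By the dichotomy provided by Lemma~\ref{lem:duality}, alternative~(1) must therefore hold: there exists some $X^{\star} \in \conv \mathcal{S}$ satisfying $A_{i} \bullet X^{\star} \ge \eps/2$ for $i=1,2$. Expressing $X^{\star}$ as a convex combination $X^{\star} = \sum_{i=1}^{m} q_{i} X_{i}$ with $q \in \Del_{m}$ gives the desired weights, since linearity of the inner product $A_{i} \bullet (\cdot)$ passes the bound through the convex combination. No additional calculation or rounding is required, so there is no genuinely hard step; the only point worth verifying carefully is that the hypothesis really matches the negation of alternative~(2) at the adjusted threshold, which is immediate once one substitutes $\mathcal{S} = \{X_{1},\ldots,X_{m}\}$.
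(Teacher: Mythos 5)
Your proposal is correct and matches the paper's route exactly: the paper treats the corollary as an immediate specialization of Lemma~\ref{lem:duality} to the finite (hence compact) set $\mathcal{S}=\set{X_1,\ldots,X_m}$ with the threshold $\eps$ replaced by $\eps/2$, so that the failure of alternative~(2) forces alternative~(1), i.e.\ a convex combination $X=\sum_i q_i X_i$ with $A_i \bullet X \ge \eps/2$. Nothing further is needed.
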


We can now provide a proof of Theorem~\ref{thm:sdp}.

\begin{proof}[Proof of Theorem~\ref{thm:sdp}]~
\paragraph{Running time.} 
The algorithm invokes \texttt{ApproxEV} at most $T = \O(\log(1/\eps))$ times on matrices of the form $A(p)$.
Since by the triangle inequality $\norm{A(p)}_2 \le p \norm{A_1}_2 + (1-p) \norm{A_2}_{2} \le 1$ (as we assume that $\norm{A_1}_2, \norm{A_2}_2 \le 1$), Lemma~\ref{lem:approx-ev} shows that this procedure can be implemented using the Lanczos method to run in $\tO(N/\sqrt{\eps})$.
Hence, the entire algorithm runs in $\tO(N/\sqrt{\eps})$ time as all other operations are linear in the problem dimensions.

\paragraph{Correctness.}
First note that since each call to the oracle \texttt{ApproxEV} errs with probability at most $\del/T$, with probability at least $1-\del$ all oracle invocations return a correct output.
Observe that if the algorithm declares that the problem is infeasible, then there exists a value of $p$ for which \texttt{ApproxEV} outputs $\lambda < 3\eps/4$.
This means that
\begin{align*}
	\max_{X \in \K_{n}} A(p) \bullet X
	= \evmax(A(p))
	\le \lambda + \eps/4
	< \eps \,,
\end{align*}
that is, $A(p) \bullet X < \eps$ for all $X \in \K_{n}$.
Lemma~\ref{lem:duality} then implies that \eqref{eq:sdp-feas} is indeed infeasible.
Thus, we henceforth assume that all invocations of \texttt{ApproxEV} returned $\lambda \ge 3\eps/4$.

Consider the values of $p_1,p_2$ and $x_1,x_2$ at the end of the main loop of the algorithm, and denote $X_i = x_i x_i\tr$ ($i=1,2$). 
Then according to our assumption, we have $A(p_1) \bullet X_1 \ge 3\eps/4$ and $A(p_2) \bullet X_2 \ge 3\eps/4$.
Also, we have have $p_2 - p_1 \le \eps/8$ since the binary search continues for $\log_2(8/\eps)$ iterations.
Our central observation is that the system of two inequalities $A(p) \bullet X_i < \eps/2$ ($i=1,2$) must be infeasible (in $p$).
Indeed, assume that there exists some $p^\st \in [0,1]$ for which $A(p^\st) \bullet X_1 < \eps/2$ and $A(p^\st) \bullet X_2 < \eps/2$.
Notice that $A(p) \bullet X_1 \ge 3\eps/4$ for $p < p_1$, as $A(p_1) \bullet X_1 \ge 3\eps/4$ and the function $p \mapsto A(p) \bullet X_1$ is monotonically deceasing (recall that $x_1\tr A_1 x_1 < x_1\tr A_2 x_1$).
Similarly we have $A(p) \bullet X_2 \ge 3\eps/4$ for $p > p_2$, so it must be the case that $p^\st \in [p_1,p_2]$.
But this implies that 
\begin{align*}
	A(p_1) \bullet X_1 - A(p^\st) \bullet X_1
	= (p^\st - p_1) \cdot (A_2 \bullet X_1 - A_1 \bullet X_1)
	\le (p_2 - p_1) \cdot 2
	\le \eps/4
\end{align*}
from which we get that $A(p^\st) \bullet X_1 \ge A(p_1) \bullet X_1 - \eps/4 \ge \eps/2$ which is a contradiction to the choice of $p^\st$.

This infeasibility, together with Corollary~\ref{cor:farkas}, leads to the conclusion that there exists $q \in [0,1]$ such that $X = q X_1 + (1-q) X_2 \in \K_{n}$ satisfies $A_i \bullet X \ge \eps/2 \;\; (i=1,2)$.
Since the existence of such $q$ is guaranteed, it can be found by solving the simple linear program~\eqref{eq:q}, thereby retrieving a rank-two matrix which is an approximate solution to \eqref{eq:sdp-feas}.
\end{proof}

\subsection{Approximate Eigenvalue Computation} \label{sub:oracle}



In our analysis we require a linear-time procedure for approximating eigenvalues of sparse matrices.
The following lemma states that the \emph{Lanczos method} provides that.

\begin{lemma} \label{lem:approx-ev}
There exists an algorithm that given a matrix $M \in \reals^{n \times n}$ with $\norm{M}_2 \le \lambda$ and parameters~$\eps,\del > 0$, runs in time 
$$
	O\left({\frac{N \sqrt{\lambda}}{\sqrt{\epsilon}}\ln\frac{n}{\delta}}\right)
$$
where $N$ is the number of non-zero entries in the matrix $M$, and returns a unit vector $x \in \reals^n$ for which $x\tr M x \ge \evmax(M) - \eps$ with probability at least $1-\del$.
\end{lemma}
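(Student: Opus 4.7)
The plan is to run the randomized Lanczos iteration on a positive semidefinite shift of $M$ and then invoke the classical convergence theorem of Kuczynski and Wozniakowski as a black box. Since $\norm{M}_{2} \le \lambda$ only constrains the spectrum of $M$ to lie in $[-\lambda, \lambda]$ and does not make $M$ positive semidefinite, I would first work with the shifted matrix $\bar M = M + \lambda I$, which is positive semidefinite with $\evmax(\bar M) = \evmax(M) + \lambda \in [0, 2\lambda]$ and $\norm{\bar M}_{2} \le 2\lambda$. For any unit vector $x$ one has $x\tr \bar M x = x\tr M x + \lambda$, so $x\tr \bar M x \ge \evmax(\bar M) - \eps$ is equivalent to $x\tr M x \ge \evmax(M) - \eps$. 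Thus it suffices to produce an $\eps$-accurate top eigenvector of the PSD matrix $\bar M$.

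Next, I would run $k$ steps of Lanczos on $\bar M$ starting from a vector $v_{0}$ drawn uniformly from the unit sphere (equivalently, a normalized standard Gaussian). Each iteration consists of one matrix-vector product with $\bar M$, which costs $O(N) + O(n) = O(N)$ because multiplication by $\lambda I$ adds only $O(n)$ work, plus $O(1)$ inner products and axpy updates; thus the per-iteration cost is $O(N)$. After $k$ steps the algorithm diagonalizes the resulting $k \times k$ tridiagonal Lanczos matrix in $O(k^{2})$ time---negligible in our parameter regime---and returns the Ritz vector associated with the largest Ritz value.

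The Kuczynski--Wozniakowski theorem states that for any PSD matrix with top eigenvalue $\Lambda$, after $k$ Lanczos iterations from a uniformly random unit start the computed Ritz value $\hat\Lambda$ satisfies $\hat\Lambda \ge \Lambda (1-\zeta)$ with probability at least $1-\del$, provided $k = \Omega(\ln(n/\del)/\sqrt{\zeta})$. To drive the absolute error $\Lambda - \hat\Lambda$ below $\eps$, I take $\zeta = \eps/\Lambda$; using $\Lambda \le 2\lambda$ gives $k = O(\sqrt{\lambda/\eps}\,\ln(n/\del))$ (the degenerate case $\Lambda = 0$ forces $\bar M = 0$, so any unit vector already attains the optimum). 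Multiplying by the $O(N)$ per-iteration cost yields the claimed total runtime $O\!\left(N\sqrt{\lambda/\eps}\,\ln(n/\del)\right)$, and since $\hat\Lambda$ is by construction the Rayleigh quotient of the returned Ritz vector, that vector witnesses the desired approximation for $M$ as well.

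The conceptual subtlety, more than any real technical obstacle, lies in justifying the $\sqrt{\lambda/\eps}$ rather than $\lambda/\eps$ dependence on the accuracy: for a worst-case starting vector Lanczos can degenerate to naive power iteration. The square-root speedup combines the Chebyshev-polynomial approximation implicit in the Krylov subspace with the fact that a uniformly random unit vector places inverse-polynomially small mass on the top eigenspace only with exponentially small probability. I would invoke Kuczynski--Wozniakowski as a black box rather than reprove this polynomial approximation argument; the only ingredients genuinely specific to our setting are the PSD shift and the routine bookkeeping showing that each Lanczos step on a sparse matrix costs $O(N)$.
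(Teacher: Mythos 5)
Your proof is correct and follows essentially the same route as the paper: shift $M$ to a positive semidefinite matrix and invoke the Kuczynski--Wozniakowski analysis of randomized Lanczos as a black box, converting their relative-error guarantee into the absolute $\eps$ bound. The only cosmetic difference is that the paper rescales to $M' = \frac{1}{2\lambda}M + \frac{1}{2}I$ with $\norm{M'}_2 \le 1$ and error $\eps/2\lambda$, while you keep the unnormalized shift $M+\lambda I$ and use $\evmax \le 2\lambda$ (and since the true $\evmax$ is unknown you should simply set the relative accuracy to $\eps/2\lambda$ rather than $\eps/\evmax$, which changes nothing in the bound).
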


The proof relies on the analysis of the Lanczos method provided by \cite{kuczynski1992estimating}.

\begin{proof}

The statement of the lemma is proved in Theorem~4.2 of \cite{kuczynski1992estimating} when $M$ is a positive semidefinite matrix with $\norm{M}_2 \le 1$.
To prove the lemma for an arbitrary matrix $M$ with $\norm{M}_2 \le \lambda$, consider the matrix $M' = \frac{1}{2\lambda} M + \frac{1}{2} I$ which is positive semidefinite with $\norm{M'}_2 \le 1$.
If we apply the Lanczos method with error parameter~$\eps' = \frac{1}{2\lambda}\eps$, we obtain with high probability a unit vector $x$ such that $x^{\top}M'x \ge \evmax(M') - \eps'$.
Hence,
\begin{align*}
	\frac{1}{2\lambda} x\tr M x + \frac{1}{2}
	= x^{\top}M'x 
	\ge \evmax(M') - \eps' 
	= \frac{1}{2\lambda} \evmax(M) + \frac{1}{2} - \frac{1}{2\lambda}\eps ~,
\end{align*}
so that $x\tr M x \ge \evmax(M) - \eps$, as required.
\end{proof}

\bibliographystyle{abbrvnat}
\bibliography{bib}

\end{document}